\newtheorem{example}{Example}
\newtheorem{theorem}{Theorem}
\begin{document}

\title{A theory on power in networks}

\author{Enrico Bozzo \\
Department of Physics, Mathematics, and Computer Science \\ 
University of Udine \\
\url{enrico.bozzo@uniud.it} \and 
Massimo Franceschet\\
Department of Physics, Mathematics, and Computer Science \\ 
University of Udine \\
\url{massimo.franceschet@uniud.it}}

\maketitle

\epigraph{It was an exceedingly long lay that, indeed; and though from the magnitude of the figure it might at first deceive a landsman, yet the slightest consideration will show that though seven hundred and seventy-seven is a pretty large number, yet, when you come to make a teenth of it, you will then see, I say, that the seven hundred and seventy-seventh part of a farthing is a good deal less than seven hundred and seventy-seven gold doubloons; and so I thought at the time.}{\textit{Moby-Dick, or The Whale  \\ Herman Melville}}

\begin{abstract}
The eigenvector centrality equation $\lambda x = A \, x$ is a successful compromise between simplicity and expressivity. It claims that central actors are those connected with central others. For at least 70 years, this equation has been explored in disparate contexts, including econometrics, sociometry, bibliometrics, Web information retrieval, and network science. We propose an equally elegant counterpart: the \textit{power equation} $x = A x^{\div}$, where $x^{\div}$ is the vector whose entries are the reciprocal of those of $x$. It asserts that power is in the hands of those connected with powerless others. It is meaningful, for instance, in bargaining situations, where it is advantageous to be connected to those who have few options. We tell the parallel, mostly unexplored story of this intriguing equation.
\end{abstract}

\section{A portrait of power} \label{portrait}
A \textit{network} consists of a crowd of actors and a set of binary relations that tie pairs of actors.
Networks are pervasive in the real world. Nature, society, information, and technology are supported by ostensibly different networks that in fact share an amazing number of interesting structural properties.
\begin{comment}
These networks underlie complex systems present in many different contexts including technological networks (Internet, telephone networks, power grids, transportation networks, and distribution networks), information networks (the Web, academic and legal citation networks, patent networks, peer-to-peer networks, and recommender networks), social networks (friendship and acquaintance networks, collaborations of scientists, movie actors, and musicians, sexual contact networks and dating patterns, criminal networks, and social networks of animals), as well as biological networks (metabolic networks, protein-protein interaction networks, genetic regulatory networks, neural networks, and ecological networks).
\end{comment}

Networks are modelled in mathematics as \textit{graphs}: actors are represented as points (also called nodes or vertices) and relations are depicted as lines (also called edges or arcs) connecting pairs of points.
In this work we focus on undirected graphs, where the edges do not have a particular orientation.
A meaningful question on networks is the following: which are the most central (important) nodes?
Many  measures have been proposed to address this question. Among them, \textit{eigenvector centrality}\footnote{In this paper we will refer to eigenvector centrality simply as centrality.} states that
an actor is central if it is connected with central actors.
This circular definition is captured by an elegant recursive equation:
\begin{equation}
\lambda x = A x
\label{eq:centrality}
\end{equation}
\noindent
where $x$ is a vector containing the sought centralities, $A$ is a matrix encoding the network, and $\lambda$ is a positive constant.  Two actors in a network that are tied by an edge are said to be neighbors. Equation \ref{eq:centrality} claims two important properties of centrality:
(i) the centrality of an actor is directly correlated with the number of its neighbors; and (ii) the centrality of an actor is directly correlated with the centrality of its neighbors. Central actors are those with many ties or, for equal number of ties, central actors are those who are connected with central others. This intriguing definition has been discovered and rediscovered many times in different contexts. It has been investigated, in chronological order, in econometrics, sociometry, bibliometrics, Web information retrieval, and network science (see \cite{F11-CACM} for an historical overview).

\begin{comment}
This intriguing definition has been discovered and rediscovered many times in different contexts \cite{F11-CACM}. It has been investigated, at least, in econometrics~\cite{L41}, sociometry~\cite{S49,K53,H65}, bibliometrics~\cite{PN76}, Web information retrieval~\cite{K98,BP98},
and network science \cite{N10}.
\end{comment}

In some circumstances, however, centrality -- the quality of being connected to central ones -- has limited utility in predicting the locus of \textit{power} in networks \cite{E62,CY92,B87}.
Consider exchange networks, where the relationship in the network involves the transfer of valued items (i.e., information, time, money, energy). A set of exchange relations is positive if exchange in one relation \textit{promotes} exchange in others and negative if exchange in one relation \textit{inhibits} exchange in others \cite{CEGY83}.
In \textit{negative exchange networks}, power comes from being connected to those who have few options. Being connected to those who have many possibilities reduces one's power.
Think, for instance, to a social network in which time is the exchanged value.
Imagine that every actor has a limited time to listen to others and that each actor divides its time between its neighbors.
Clearly, exchange of time in one relation precludes the exchange of the same time in other relations. What are the actors that receive most attention?
These are the nodes that are connected to many neighbors with few options, since they receive almost full attention from all their neighbors.
On the other hand, actors connected to few neighbors with a lot of possibilities receive little consideration, since their neighbors are mostly busy with others. 

\begin{comment}
Imagine, as another example, a network in which the actors are countries and the connections are established contracts for the exchange of oil.
Again, exchange of oil with a business partner precludes the exchange of the same oil with others. A country whose aim is to maximize the oil import should enter into agreements with countries that have few possibilities to sell its oil, and not with countries with many partners to export.
\end{comment}

\begin{comment}
As Bonacich says: \textit{``In bargaining situations, it is advantageous to be connected to those who have few options; power comes from being connected to those who are powerless. Being connected to powerful others who have many potential trading partners reduces one's bargaining power''} \cite{B87}. He adds that this notion of power appears already in Caplow's and Gamson's well-known theories of coalition formation of late sixties \cite{C68,G69}: \textit{``Because each actor wishes to play as dominant a role in his coalition as possible, powerful actors tend to be avoided as coalition partners.''} \cite{B87}. In a related paper, Cook et al.\ claim that: \textit{``In certain types of exchange networks existing measures of structural centrality (...) have limited utility in predicting the locus of network power''} \cite{CEGY83}.
\end{comment}

In this paper we propose a theory on power in the context of networks. We start by the following thesis:
An actor is powerful if it is connected with powerless actors. We implement this circular thesis with the following equation:
\begin{equation}
x = A \, x^{\div}
\label{eq:power}
\end{equation}
\noindent where $x$ is the sought power vector, $A$ is a matrix encoding the network and $x^{\div}$ is the vector whose entries are the reciprocal of those of $x$. Equation \ref{eq:power} states two important properties of power: (i) the power of an actor is directly correlated with the number of its neighbors; and (ii) the power of an actor is inversely correlated with the power of its neighbors. The first property seems reasonable: the more ties an actor has, the more powerful the actor is. The second property characterizes power: for equal number of ties, actors that are linked to powerless others are powerful; on the other hand, actors that are tied to powerful others are powerless.

\begin{comment}
The difference between power and centrality is exemplified in Figure \ref{exp0-star}.

\begin{figure}
%\vspace{-2cm}
\begin{center}
%\includegraphics[scale=0.40, angle=270]{exp0-star.eps}
%\vspace{-2cm}
\includegraphics[scale=0.40, angle=270]{exp0-stars.eps}
%\vspace{-2cm}
%\includegraphics[scale=0.40, angle=270]{exp0-star-clique.eps}
%\vspace{-2cm}
\end{center}
\caption{The difference between power and centrality on a simple star-like network. On the left network, node size is proportional to power -- the quality of being connected to powerless ones -- while on the right network, node size is proportional to centrality -- the quality of being connected to central ones. }
\label{exp0-star}
\end{figure}
\end{comment}

We investigate the existence and uniqueness of a solution for Equation \ref{eq:power} exploiting well-known results in combinatorial matrix theory. We study how to regain the solution when it does not exist, by perturbing the matrix representing the network. We formally relate the introduced notion of power with alternative ones and empirically compare them on the European natural gas pipeline network.

\begin{comment}
We use as a link between the two theories the notion of \textit{regularizability} of a graph, a concept introduced and studied by Berge \cite{B78}. Informally, in a regularizable graph we can assign strengths to relations among nodes so that there are no differences in power among nodes: a regularizable graph is a society that can be made \textit{egalitarian}.
The results discussed in the paper, including references to corresponding theorems, are summarized in Table \ref{sumup}.
\begin{table}[t] \caption{An overview of the main results contained in the paper including references to corresponding theorems ($A$ is the adjacency matrix of graph $G_A$). \label{sumup}} \begin{center} \begin{tabular}{|cc|} \hline $A$ has a power solution & $\begin{array}{c} \mathrm{Th.} \ \ref{theorem:balancing} \\ \iff \end{array}$  \\ $A$ can be balanced & $\begin{array}{c} \mathrm{Th.} \ \ref{theorem:total-support} \\ \iff \end{array}$ \\ $A$ has total support & $\begin{array}{c} \mathrm{Th.} \ \ref{theorem:2-matching} \\ \iff \end{array}$ \\ each edge of $G_A$ lies on a perfect 2-matching  & $\begin{array}{c} \mathrm{Th.} \ \ref{theorem:regularizability} \\ \iff \end{array}$ \\ $G_A$ is regularizable  &\\ \hline \end{tabular} \end{center} \end{table}
\end{comment}

\section{Motivating example} \label{motivation}

In his seminal work on power-dependance relations, dated 1962, Richard Emerson claims that power is a property of the social relation, not an attribute of the person: ``X has power'' is vacant, unless we specify ``over whom''. Power resides implicitly in other's dependance, and dependance of an actor A upon actor B is (i) directly proportional to A's motivational investment in goals mediated by B, and (ii) inversely proportional to the availability of those goals to A outside the A--B relation. The availability fo such goals outside of the relation refers to alternatives avenues of goal-achievement, most notably other social relations \cite{E62}. This type of relational power is endogenous with respect to the network structures, meaning that it is a function of the position of the node in the network. Exogenous factors, such as allure or charisma, which are external to the network structure,
might be added to endogenous power to complete the picture.

We begin with some small, archetypal examples typically used in exchange network theory to informally illustrate the notion of power and sometimes to distinguish it from the intersecting concept of centrality \cite{EK10}. Consider a 2-node path:
 
$$A - B$$

The situation is perfectly symmetric and  a reasonable prediction is that both actors have the same power. In a 3-node path
 
$$A - B - C$$

\noindent much is changed. Intuitively, B is powerful, and A and C are not. Indeed, both A and C have no alternative venues besides B (both depend on B), while B can exclude one on them by choosing the other.\footnote{We assume here the so called 1-exchange rule, meaning that each node may exchange with at most 1 neighbor. Equivalently, we consider negative exchange network, in which the exchange in one relation inhibits exchange in others.} A different basis for B's power might be implicit in the economic principle of satiation (see Chapter 12 in \cite{EK10}).
In a 4-node path
 
$$A - B - C - D$$

\noindent actors B and C hold power, while A and D are dependent on either B or C. Nevertheless, the power of B is less here than in the 3-node path: in both cases, A depends on B, but in the 3-node path, also C depends on B, while in the 4-node path C has an alternative (node D). Hence, node B is less powerful in the 4-node path with respect to the 3-node path since its neighbors are more powerful.   
Finally, the 5-node path
 
$$A - B - C - D - E$$

\noindent is interesting since it discriminates power from centrality. All traditional central measures (eigenvector, closeness, betweenness) claim that C is the central one. Nevertheless, B and D are reasonably the powerful ones. Again, this because they negotiate with weak partners (A and C or E and C), while C bargains with strong parties (B and D). This example is useful to illustrate an additional subtle aspect of power. Notice that in both the 5-node path and the 4-node path B is surrounded by nodes (A and C) that are locally similar (for instance, they have the same degree in both paths). However, the power of C is reasonably less in the 5-node path than in the 4-node path, hence we might expect that the power of B is higher in the 5-node path with respect to the 4-node path. This separation is only possible if the notion of power spans beyond the local neighborhood of a node (for example, if power is recursively defined).

\begin{figure}[t]
\begin{center}
\includegraphics[scale=0.60]{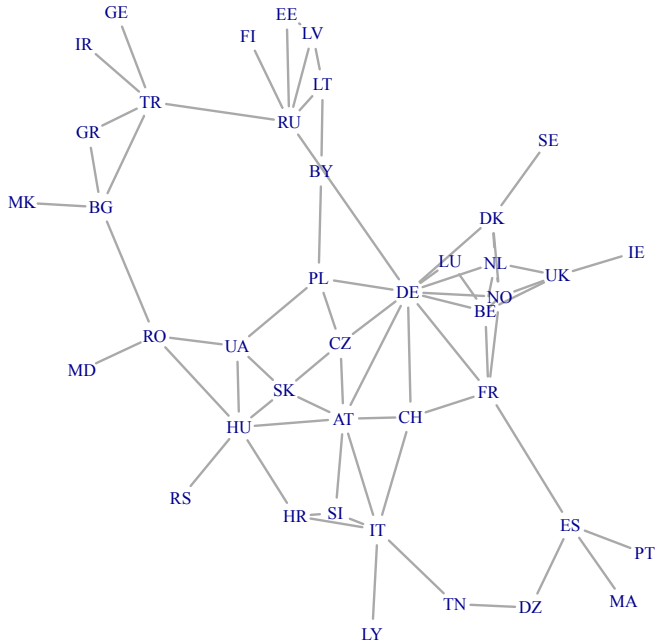}
\end{center}
\vspace{-2cm}
\caption{The European natural gas pipeline network.}
\label{fig:gas}
\end{figure}

As a larger and more realistic example, consider Figure \ref{fig:gas}, which depicts the European natural gas pipeline network. Nodes are European countries (country codes according to ISO 3166-1) and there is an undirected edge between two nations if there exists a natural gas pipeline that crosses the borders of the two countries. Data has been downloaded from the website of the International Energy Agency (\url{www.iea.org}). The original data corresponds to a directed, weighted multigraph, with edge weights corresponding to the maximum flow of the pipeline. We simplified and symmetrized the network, mapping the edge weights in a consistent way. 

This is a negative exchange network, since the exchange of gas with a country precludes the exchange of the very same gas with others. Intuitively, powerful countries are those connected with states that have few possibilities to exchange the gas. Suppose that country B is connected to countries A and C, and B is the only connection for them: $A - B - C$. Countries A and C can sell or buy gas only from B, while country B can chose between A and C. Reasonably, the bargaining power of B is higher, and this traduces in higher revenues or less expenses for B in the gas negotiation.

\section{A theory on power} \label{theory}

Let $G$ be an undirected, weighted graph. The graph $G$ may contain \textit{loops}, that are edges from a node to itself. The edges of $G$ are labelled with positive weights. Let $A$  be the adjacency matrix of $G$, that is $A_{i,j}$ is the weight of edge $(i,j)$ if such edge exists, and $A_{i,j} = 0$ otherwise. Hence $A$ is a square, symmetric, non-negative matrix. Loops in $G$ correspond to elements in the main diagonal of $A$.

The \textit{centrality problem} is as follows: find a vector $x$ with positive entries such that
\begin{equation}\label{centrality}
\lambda x = A \, x.
\end{equation}
where $\lambda > 0$ is a constant. This means that $\lambda x_i = \sum_j A_{i,j} \, x_j$, that is, the centrality of a node is proportional to the weighted sum of centralities of its neighbors.
Notably, this  is the main idea behind PageRank, Google's original Web page ranking algorithm. PageRank 
determines the importance of a Web page in terms of the importance assigned to the pages hyperlinking to it. Besides Web information retrieval, this thesis has been successfully exploited
in disparate contexts, including bibliometrics, sociometry, and econometrics \cite{F11-CACM}.

We define the \textit{power problem} as follows: find a vector $x$ with positive entries such that
\begin{equation}\label{power}
x = A \, x^{\div}.
\end{equation}
where we denote with $x^{\div}$ the vector whose entries are the reciprocals of those of $x$. 
This means that $x_i = \sum_j A_{i,j} / x_j$, that is, the power of a node is equal to the weighted sum of reciprocals of power of its neighbors. Notice that if $\lambda x = A \, x^{\div}$, then, setting $y = \sqrt{\lambda} \, x$, we have that $y = A \, y^{\div}$, hence the proportionality constant $\lambda$ is not necessary in the power equation.
This notion of power is relevant on negative exchange networks \cite{CY92,B87}. On this networks, when a value is exchanged between actors along a relation, it is consumed and cannot be exchanged along another relation. Hence, important actors are those in contact with many actors having few exchanging possibilities. 

Finally, the \textit{balancing problem} is the following: find a diagonal matrix $D$ with positive main diagonal such that
$$S = DAD$$
is doubly stochastic, that is, all rows and columns of $S$ sum to $1$. The balancing problem is a fundamental question that is claimed to have first been used in the 1930's for calculating traffic flow \cite{BCP93} and since then it has been applied in many disparate contexts \cite{KR13}.

It turns out that the power problem is intimately related to the balancing problem.
Given a vector $x$, let $D_x$ be the diagonal matrix whose diagonal entries coincide with those of  $x$. We have the following result.

\begin{theorem} \label{theorem:balancing}
The vector $x$ is a solution of the power problem if and only if the diagonal matrix $D_{x^{\div}}$ is a solution for the balancing problem.
\end{theorem}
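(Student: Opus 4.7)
The proof should be essentially a direct computation hinging on two observations: that $D_{x^{\div}}$ acts diagonally on $A$ in a way that mirrors the componentwise equation $x_i = \sum_j A_{ij}/x_j$, and that the symmetry of $A$ lets row-sum conditions double as column-sum conditions.

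First I would unpack the entries of the candidate balanced matrix. Writing $D = D_{x^{\div}}$, the diagonal entries of $D$ are $1/x_i$, so
\[
(DAD)_{ij} \;=\; \frac{A_{ij}}{x_i \, x_j}.
\]
The row sum of row $i$ is therefore
\[
\sum_{j} \frac{A_{ij}}{x_i \, x_j} \;=\; \frac{1}{x_i} \sum_{j} \frac{A_{ij}}{x_j} \;=\; \frac{1}{x_i} \, (A x^{\div})_i.
\]
This is the bridge identity on which both directions hang.

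For the forward direction, assume $x = A x^{\div}$, so $(A x^{\div})_i = x_i$ for every $i$; then every row sum of $DAD$ equals $1$. Since $A$ is symmetric and $D$ is diagonal, $DAD$ is symmetric, so its column sums also equal $1$, and $DAD$ is doubly stochastic. For the converse, assume $DAD$ is doubly stochastic; in particular each row sums to $1$, so $(1/x_i)(A x^{\div})_i = 1$, that is, $(A x^{\div})_i = x_i$, which is exactly the power equation.

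The only step that is not purely mechanical is noticing that symmetry of $A$ makes checking row stochasticity suffice, so we never need a separate argument for column sums; apart from that, the statement reduces to rewriting the same scalar identity in two equivalent forms.
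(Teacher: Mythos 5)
Your proof is correct and is essentially the paper's own argument: the paper carries out the same chain of equivalences in matrix notation ($x = A x^{\div} \Leftrightarrow D_x e = A D_{x^{\div}} e \Leftrightarrow e = D_{x^{\div}} A D_{x^{\div}} e$), which is exactly your entrywise bridge identity, and likewise invokes symmetry of $A$ and $D$ so that row stochasticity implies column stochasticity. No gaps.
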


\begin{proof}
If $DAD$ is doubly stochastic, then $DAD e =  e$ and $e^T DAD= e^T$, where $e$ is a vector of all $1$s. Actually, since $A$ and $D$ are symmetric, it holds  that $DAD e =  e \Leftrightarrow e^T DAD= e^T$.
If the vector $x$ does not have zero entries, then $D_x$ is invertible and $D^{-1}_{x} = D_{x^{\div}}$. We have that:
$x = A \, x^{\div} \Leftrightarrow   D_x e = A D_{x^{\div}} e \Leftrightarrow e = D^{-1}_x A D_{x^{\div}} e   \Leftrightarrow e = D_{x^{\div}} A D_{x^{\div}} e.$
\end{proof}

\begin{comment}
We initially investigate the existence and uniqueness of a solution for the balancing (or power) problem in a setting of combinatorial matrix theory.  Given a permutation $\sigma$ of $\{1, \ldots, n\}$, a \textit{diagonal} of matrix $A$ is a sequence of elements $a_{1, \sigma(1)}, \ldots, a_{n, \sigma(n)}$. Hence a diagonal is a sequence of elements that lie on different rows and columns of the matrix. If $\sigma$ is the identity, the diagonal is called the main diagonal. A diagonal is positive if all its elements are greater than $0$. $A$ is said to have \textit{support} if it contains a positive diagonal. $A$ is said to have \textit{total support} is $A \neq 0$ and every positive element of $A$ lies on a positive diagonal. Clearly, total support implies support.\footnote{Aggiungi un esempio con le matrici di adiacenza dei quattro grafi in Figura \ref{4graphs} riportati piu' avanti e spiega, in termini combinatorici, perche' hanno quelle proprieta'.}
\end{comment}

\subsection{Existence and unicity of a solution}

The link between the balancing and the power problem that we established in Theorem \ref{theorem:balancing} allows us to investigate a solution of the power problem (Equation \ref{power}) using the well-sedimented theory of matrix balancing. 

We recall that \textit{diagonal} of a square $n \times n$ matrix is a sequence of $n$ elements that lie on different rows and columns of the matrix. A permutation matrix is a square $n \times n$ matrix that has exactly one entry $1$ in each row and each column and $0$s elsewhere. Clearly, each diagonal corresponds to a permutation matrix where the positions of the diagonal elements correspond to those of the unity entries of the permutation matrix.
In particular, the identity matrix $I$ is a permutation matrix and the diagonal of $A$ associated with  $I$ is called the main diagonal of $A$. A diagonal is positive if all its elements are greater than $0$. A matrix $A$ is said to have \textit{support} if it contains a positive diagonal, and it is said to have \textit{total support} if $A \neq 0$ and every positive element of $A$ lies on a positive diagonal. Clearly, total support implies support. 

\begin{comment}
\begin{example} \label{4matrices} Let us consider the matrices
\begin{eqnarray*}
A=\begin{pmatrix}0 & 1 & 1 & 1\cr 1& 0 & 0 & 0\cr 1& 0 & 0 & 0\cr 1& 0 & 0 & 0\end{pmatrix}
&\qquad& B=\begin{pmatrix}0 & 1 & 1 & 1\cr 1& 0 & 1 & 0\cr 1& 1 & 0 & 0\cr 1& 0 & 0 & 0\end{pmatrix} \\
C=\begin{pmatrix}0 & 1 & 1 & 0\cr 1& 0 & 1 & 0\cr 1& 1 & 0 & 0\cr 0& 0 & 0 & 1\end{pmatrix}
&\qquad& D=\begin{pmatrix}0 & 1 & 1 \cr 1& 0 & 1 \cr 1& 1 & 0 \end{pmatrix}
\end{eqnarray*}
Matrix $A$ does not have support since, once chosen a nonzero entry in the first row and one in the first
column all the remaining entries are $0$s. Matrix $B$ has support since $B(i,5-i)=1$ for $i=1,\ldots,4$.
However, matrix $B$ does not have total support since, for example, $B(1,2)=1$ does not belong to a positive diagonal. Matrix $C$ has total support since the two positive diagonals $C(1,2)$, $C(2,3)$, $C(3,1)$, $C(4,4)$
and $C(1,3)$, $C(2,1)$, $C(3,2)$, $C(4,4)$ contain all the nonzero entries of the matrix. Analogously $D$ has total support.
\end{example}
\end{comment}

A matrix is \textit{indecomposable} (\textit{irreducible}) if it is not possible to find a permutation matrix $P$ such that $$P^TAP = \begin{pmatrix} X  &  Y \cr 0 & Z  \end{pmatrix}$$
\begin{comment}
\begin{equation}\label{reduction}
P^TAP = \begin{pmatrix} X  &  Y \cr 0 & Z  \end{pmatrix},
\end{equation}
\end{comment}
where $X$ and $Z$ are both square matrices and $0$ is a matrix of $0$s, otherwise $A$ is \textit{decomposable} (\textit{reducible}).
\begin{comment}
Notice that if $A$ is symmetric and decomposable it is possible to find a permutation matrix $P$ such that Equation \ref{reduction} holds with $Y=0$.
\end{comment}
A matrix is \textit{fully indecomposable} if it is not possible to find permutation matrices $P$ and $Q$ such that
$$PAQ = \begin{pmatrix} X  &  Y \cr 0 & Z  \end{pmatrix}$$
where $X$ and $Z$ are both square matrices, otherwise $A$ is \textit{partly decomposable}. Clearly, a matrix that is fully indecomposable is also irreducible. It also holds that full indecomposability implies total support \cite{Br80}. Moreover, the adjacency matrix of a bipartite graph is never fully indecomposable, while the adjacency matrix of a non-bipartite graph is fully indecomposable if and only if it has total support and is irreducible \cite{CD72}. We say that a graph has support, total support, is irreducible, and is fully indecomposable if the corresponding adjacency matrix has these properties.

\begin{comment}
 \begin{example}
 Let us consider again the four matrices introduced in Example \ref{4matrices}. Matrix $A$ is irreducible since, for every permutation $P$,  matrix $P^TAP$ is equal to $A$ or to one of the following three matrices
$$\begin{pmatrix}0 & 1 & 0 &0\cr 1& 0 & 1 & 1\cr 0& 1 & 0 & 0\cr 0& 1 & 0 & 0\end{pmatrix},\quad
\begin{pmatrix}0 & 0 & 1 &0\cr 0& 0 & 1 & 0\cr 1& 1 & 0 & 1\cr 0& 0 & 1 & 0\end{pmatrix},\quad
\begin{pmatrix}0 & 0 & 0 &1\cr 0& 0 & 0 & 1\cr 0& 0 & 0 & 1\cr 1& 1 & 1 & 0\end{pmatrix}.$$
  A fortiori, $B$ is irreducible. Since $A$ and $B$ do not have total support, they cannot be fully indecomposable. Matrix $C$ is clearly reducible ($P^TCP$ has already the required form for $P=I$), hence it is not fully indecomposable. Matrix $D$ is fully indecomposable since it is impossible to permute its rows and its columns in order to place two $0$s in the same row or column. Hence it is also irreducible.
\end{example}

The Venn diagram in Figure \ref{mysets} summarizes the inclusion relationships among the classes of matrices introduced so far.

\begin{figure}[t]
\begin{center}
\includegraphics[scale=0.30]{sets.eps}
\end{center}
\caption{Inclusion relationships among the set $S$ of the matrices with support, the set $T$ of the matrices with total support, the set $I$ of the irreducible matrices and the set $F$ of the fully indecomposable matrices.}
\label{mysets}
\end{figure}

\end{comment}

\begin{figure}[t]
%\vspace{-3cm}
\begin{center}
\includegraphics[scale=0.40, angle=270]{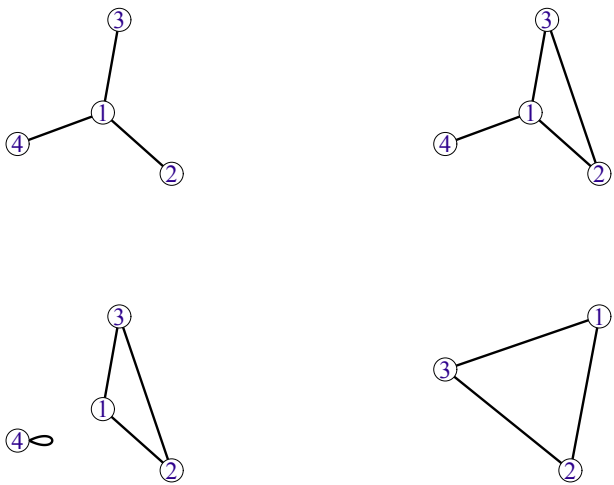}
\end{center}
%\vspace{-3cm}
\caption{Top-left: the graph has no support, since a spanning cycle forest is missing. Top-right: the graph has support formed by edges $(1,4)$ and $(2,3)$ but the support is not total (edges $(1,3)$ and $(1,2)$ are not part of any spanning cycle graph). Bottom-left: the graph has total support but it is not irreducible, hence it is not fully indecomposable. Bottom-right: the graph has total support, is irreducible and not-bipartite, hence it is fully indecomposable.}
\label{4graphs}
\end{figure}

The combinatorial notions presented above are rather terse. Fortunately, most of them have a simple interpretation in graph theory. It is known that irreducibility of the adjacency matrix corresponds to connectedness of the graph. Moreover, given an undirected graph $G$, let us define a \textit{spanning cycle forest} of $G$ a spanning subgraph of $G$ whose connected components are single edges or cycles (including loops, that are cycles of length 1). It is easy to realize that there exists a correspondance between diagonals in the adjacency matrix and spanning cycle forests in the graph. 
Hence a graph has support if and only if it contains a spanning cycle forest and it has total support if and only if each edge is included in a spanning cycle forest. Four examples are depicted in Figure \ref{4graphs}.

The following is a well-known necessary and sufficient condition for the solution of the balancing problem \cite{SK67,CD72}.

\begin{theorem} \label{theorem:total-support}
Let $A$ be a symmetric non-negative square matrix. A necessary and sufficient condition for the existence of a doubly stochastic matrix $S$ of the form $DAD$, where $D$ is a diagonal matrix with positive main diagonal, is that $A$ has total support. If $S$ exists, then it is unique. If $A$ is fully indecomposable, then matrix $D$ is unique.
\end{theorem}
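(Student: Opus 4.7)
The plan is to split the theorem into three separate tasks --- necessity of total support, existence of a scaling $D$, and uniqueness of $S$ --- and handle them in that order; sufficiency will be the principal obstacle.

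For necessity, suppose $S = DAD$ is doubly stochastic with $D$ strictly positive. Since $D$ is invertible and positive, $A$ and $S$ share the same sparsity pattern and the same family of positive diagonals. By the Birkhoff--von Neumann theorem, $S$ is a convex combination $\sum_k c_k P_k$ of permutation matrices with all $c_k > 0$. Every positive entry $S_{ij}$ must be covered by at least one $P_k$, and that $P_k$ prescribes a positive diagonal of $S$ passing through $(i,j)$. Translating back to $A$, every positive entry of $A$ lies on a positive diagonal, so $A$ has total support.

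For sufficiency I would invoke a symmetric Sinkhorn--Knopp iterative scaling. Starting from $d^{(0)} = e$, a symmetric update such as $d^{(k+1)}_i = \sqrt{d^{(k)}_i / (A d^{(k)})_i}$ generates a sequence whose limit $d^\star$ satisfies $A d^\star = (d^\star)^{\div}$, equivalently $D^\star A D^\star$ is doubly stochastic. Three technical ingredients are needed: (i) the iteration is well-defined as long as $A$ has support; (ii) a suitable Lyapunov functional decreases along the iteration and is bounded below, yielding compactness; (iii) total support is precisely the combinatorial condition preventing the iterates from escaping to the boundary of the positive orthant, i.e.\ preventing some $d^{(k)}_i$ from drifting to $0$ or $\infty$. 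This third point is the technical heart of the proof: the graph-theoretic content of total support (every edge lies on a spanning cycle forest) is what translates into the analytic non-degeneracy of the iterates. An attractive alternative is to apply the non-symmetric Sinkhorn--Knopp theorem to produce $D_1, D_2$ with $D_1 A D_2$ doubly stochastic, and then exploit symmetry $A = A^T$ together with the Sinkhorn--Knopp uniqueness clause to conclude $D_2 = \alpha D_1$ for some $\alpha > 0$, taking $D := \sqrt{\alpha}\, D_1$.

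For uniqueness, suppose $S_1 = D_1 A D_1$ and $S_2 = D_2 A D_2$ are both doubly stochastic, and set $v_i = (D_2)_{ii}/(D_1)_{ii}$. Double stochasticity of $S_2$ rewritten in terms of $S_1$ reads $S_1 v = v^{\div}$. Applying the weighted AM--GM inequality row-wise, $(S_1 v)_i \geq \prod_j v_j^{(S_1)_{ij}}$, and taking logarithms $u_i = \log v_i$ gives $-u_i \geq (S_1 u)_i$. Summing over $i$ and using that $S_1$ is column-stochastic yields $\sum_i u_i \leq 0$, i.e.\ $\prod_i v_i \leq 1$. The symmetric argument applied to $w := v^{\div} = S_1 v$ gives the reverse inequality, hence $\prod_i v_i = 1$ and equality holds in every row's AM--GM: for each $i$, $v_j$ is constant over $j$ in the support of row $i$ of $S_1$. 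If $A$ is fully indecomposable then $S_1$ is irreducible, so this equality propagates along positive entries of $S_1$ to force all $v_i$ equal; combined with $\prod_i v_i = 1$ this gives $v = e$, hence $D_1 = D_2$. In the general total-support case, one first decomposes $A$ symmetrically into fully indecomposable diagonal blocks and applies the same argument on each block, concluding that $S$ is uniquely determined while $D$ may vary block-wise.
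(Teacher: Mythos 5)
The paper does not prove this theorem: it is imported verbatim from the literature, with the sentence ``The following is a well-known necessary and sufficient condition for the solution of the balancing problem'' and citations to Sinkhorn--Knopp (1967) and Csima--Datta (1972). So there is no in-paper argument to compare against; what follows is an assessment of your proposal on its own terms.

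Two of your three parts are essentially complete and correct. The necessity argument (same sparsity pattern, Birkhoff--von Neumann, every positive entry of $S$ covered by some permutation matrix in the convex combination, hence lying on a positive diagonal) is the standard proof and is sound. The uniqueness-of-$S$ argument via $S_1 v = v^{\div}$, weighted AM--GM, and the two-sided inequality $\sum_i \log v_i = 0$ is also correct; note that once you have equality in each row's AM--GM you get $v_j = v_i^{-1}$ for every $j$ in the support of row $i$, whence $(S_2)_{ij} = v_i v_j (S_1)_{ij} = (S_1)_{ij}$ directly --- uniqueness of $S$ needs no block decomposition at all. One imprecision in the uniqueness-of-$D$ clause: irreducibility of $S_1$ does not ``force all $v_i$ equal''; the propagation rule $v_j = v_i^{-1}$ along edges only forces the values to alternate between $c$ and $c^{-1}$, and a connected bipartite pattern with equal part sizes survives both this and $\prod_i v_i = 1$. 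What rescues the claim is that a fully indecomposable \emph{symmetric} matrix is non-bipartite (as the paper notes, citing Csima--Datta), so an odd cycle forces $c = c^{-1} = 1$. You should invoke non-bipartiteness, not just irreducibility.

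The genuine gap is in sufficiency, which is the substantive half of the theorem. Your primary route explicitly labels item (iii) --- that total support prevents the Sinkhorn--Knopp iterates from escaping to the boundary of the positive orthant --- as ``the technical heart of the proof'' and then does not supply it; no Lyapunov functional is exhibited and no compactness argument is carried out, so this is a roadmap rather than a proof. Your proposed shortcut is also flawed as stated: the Sinkhorn--Knopp uniqueness clause that would let you conclude $D_2 = \alpha D_1$ from $D_1 A D_2$ and $D_2 A D_1$ both being doubly stochastic holds only for \emph{fully indecomposable} $A$, not for matrices with mere total support (take $A = I$ with $D_1 = \mathrm{diag}(2,1)$, $D_2 = \mathrm{diag}(1/2,1)$). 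In the general total-support case one must first reduce, by a simultaneous row-and-column permutation, to a direct sum of fully indecomposable symmetric blocks --- this reduction for symmetric matrices is precisely the content of the Csima--Datta paper the theorem cites --- and only then apply the fully indecomposable case blockwise. Without either the boundary-escape estimate or this reduction, existence is not established.
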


\begin{comment}
It is useful to remark that matrix $D$ can be unique even if $A$ is not fully indecomposable. As a simple example let $A=I$.  The identity matrix $I$ has total support but is not fully indecomposable. The only diagonal matrix  with positive diagonal entries that solves the balancing problem for $I$ is $I$ itself.
\end{comment}

It follows that the power problem $x = A \, x^{\div}$ has a solution exactly on the class of graphs that have total support. Moreover, if the graph is fully indecomposable, then the solution is also unique.

\begin{comment}
It is useful to remark that the power solution can be unique even if $A$ is not fully indecomposable. As a simple example let $A=I$.  The identity matrix $I$ has total support but is not fully indecomposable. The only diagonal matrix  with positive diagonal entries that solves the balancing problem for $I$ is $I$ itself.
\end{comment}

\subsection{Perturbation: regaining the solution}

What about the power problem on graphs whose adjacency matrix has not total support? For such graphs, the power problem has no solution. Nevertheless, a solution can be regained by perturbing the adjacency matrix of the graph in a suitable way. We investigate two perturbations on the adjacency matrix $A$:

\begin{enumerate}
\item \textit{Diagonal perturbation}: $A^{D}_{\alpha} = A + \alpha I$, where $\alpha > 0$ is a damping parameter and $I$ is the identity matrix;
\item \textit{Full perturbation}: $A^{F}_{\alpha} = A + \alpha E$, where $\alpha > 0$ is a damping parameter and $E$ is a full matrix of all $1$s.
\end{enumerate}

Matrix $A^{F}_{\alpha}$ is clearly fully indecomposable, it has total support and is irreducible. Hence, the power problem (as well as the centrality one) on a fully perturbed matrix has a unique solution. On the other hand, matrix $A^{D}_{\alpha}$ has total support. Indeed, if $A_{i,j} > 0$ and $i=j$, then the main diagonal $A_{k,k}$, for $1 \leq k \leq n$ is positive and contains $A_{i,j}$. If $i \neq j$, then the diagonal $A_{i,j}, A_{j,i}, A_{k,k}$, for $1 \leq k \leq n$ and $k \neq i,j$ is positive and contains $A_{i,j}$. 
Thus, the power problem on a diagonally perturbed matrix has a solution. Moreover, the solution is unique if $A$ is irreducible, since it is known that for a symmetric matrix $A$ it holds that $A$ is irreducible if and only if $A+I$ is fully indecomposable \cite{BR91}.
Interestingly enough, the diagonal perturbation, besides providing convergence of the method, is useful to incorporate exogenous power in the model. By setting a positive value in the $i$-th position of the diagonal, we are saying that node $i$ has a minimal amount of power that is not a function of the position of the node in the network. Hence, we can play with the diagonal of the adjacency matrix to assign nodes with potentially different entry levels of exogenous power.

Intuitively, the diagonal perturbation is less invasive than its full counterpart: the former modifies the diagonal elements only, the latter touches all matrix elements. However, how much invasive is the perturbation with respect to the resulting power? To investigate this issue, we computed the  correlation between original and perturbed power solutions. A simple and intuitive measure of the correlation between two rankings of size $n$ is Kendall rank correlation coefficient $k$, which is the difference between the fraction of concordant pairs $c$ (the number of concordant pairs divided by $n (n-1)/2$) and that of discordant pairs $d$ in the two rankings: $k = c - d$. Since $c + d = 1$, we have that the probability that two random pairs in the two rankings are concordant is $c = (k + 1) / 2$ and that for discordant pairs is $d = (-k + 1) / 2$. The coefficient runs from -1 to 1, with negative values indicating negative correlation, positive values indicating positive correlation, and values close to 0 indicating independence. We used the following network data sets: a social network among dolphins  \cite{LN04}, the Madrid train bombing terrorist network \cite{H06}, a social network of Jazz musicians \cite{GD03}, a network of friendships between members of a karate club \cite{Z77}, a collaboration network of scholars in the field of network science \cite{NG04}, and a co-appearance network of characters in the novel Anna Karenina by Lev Tolstoj \cite{K93}.

The main outcomes of the current experiment are as follows (see Figure \ref{exp2-compare}): (i) as soon as the damping parameter is small, both diagonal and full perturbations do not significantly change the original power; (ii) power with diagonal perturbation is closer to original power than power with full perturbation; (iii) the larger the damping parameter, the lower the adherence of perturbed solutions to the original one. 

\begin{figure*}[t]
%\vspace{-3cm}
\begin{center}
\includegraphics[scale=0.30, angle=270]{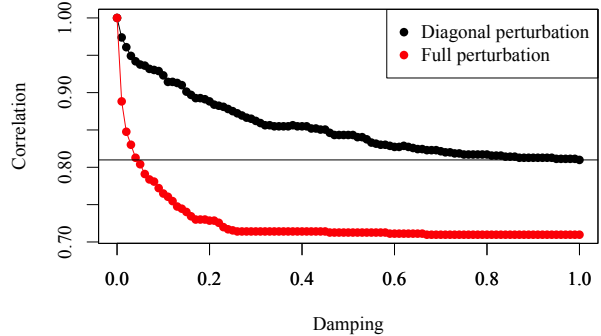}
\end{center}
%\vspace{-3cm}
\caption{Correlation between original and perturbed powers varying the damping parameter from 0 to 1 on the largest biconnected component of the social network among dolphins (which has total support). The horizontal line corresponds to the correlation with diagonal perturbation and maximum damping. The correlation on the other networks is similar.}
\label{exp2-compare}
\end{figure*}

\subsection{Computing power}

Due  to the established relationship between the balancing problem and the power problem, we can use known methods for the former in order to solve the latter. The simplest approach to solve the Equation \ref{power} is to set up the iterative method
\begin{equation}\label{SKiteration}
x_{k+1}=A\, x_k^{\div},
\end{equation}
known as Sinkhorn-Knopp method \cite{SK67}.
If we set $x_0 = e$, the vector of all $1$s, then the first iteration $x_1 = A e$,
that is $x_1(i) = \sum_j A_{i,j}$ is the degree $d_i$ of $i$. The second iteration $x_2 = A \,  (A e)^{\div}$, that is $x_2(i) = \sum_j A_{i,j} / d_i$ is the sum of reciprocals of the degrees of the neighbors of $i$.

\begin{comment}
The iteration \ref{SKiteration} is generally not convergent.
Actually, Sinkhorn and Knopp \cite{SK67} proved that the condition that $A$ has total support is equivalent to the convergence of the even and odd iterates in \ref{SKiteration}, see also \cite{Kn08}. If we denote with $c$ and $r$ the two limits then both $D_{c^{\div}}AD_{r^{\div}}$ and $D_{r^{\div}}AD_{c^{\div}}$ turn out to be doubly stochastic. If $A$ is fully indecomposable
this implies, see again \cite{SK67,Kn08}, that there exists a positive scalar $\alpha$ such that $c=\alpha r$. Hence
$$D_{c^{\div}}AD_{r^{\div}}=D_{r^{\div}}AD_{c^{\div}}=D_{(r\sqrt{\alpha})^{\div}}AD_{(r\sqrt{\alpha})^{\div}}.$$
Under the condition that $A$ is fully indecomposable, the rate of convergence is bounded by the modulus of the subdominant eigenvalue of the balanced  matrix $S=DAD$ (being $S$ doubly stochastic, the dominant eigenvalue is $1$) \cite{Kn08}.
\end{comment}

If $A$ has total support then the Sinkhorn-Knopp method converges (more precisely, the even and odd iterates of the method converge to power vectors that differ by a multiplicative constant). The convergence is linear with rate of convergence that depends of the subdominant eigenvalue of the the balanced matrix $S = DAD$ (see Theorem \ref{theorem:total-support}) \cite{SK67}. 
In some cases, however, the convergence can be very slow. In \cite{KR13} the authors propose a faster algorithm based on Newton's method that we now describe according to our setting and notations. In order to solve Equation \ref{power}, we apply Newton's method for finding the zeros of the function $f:\mathbb{R}^n\rightarrow \mathbb{R}^n$ defined by $f(x)=x-Ax^{\div}$. It is not difficult to check that
$$\frac{\partial f_i}{\partial x_j}(x)=\delta_{i,j}+\frac{A_{ij}}{x_j^2}, \qquad i,j=1,\ldots,n,$$
where $\delta_{i,j}=1$ if $i=j$ and $\delta_{i,j}=0$ otherwise.
We collect these partial derivatives in the Jacobian matrix of $f$ that turns to be
$$J_f(x)=I+AD_{(x^2)^{\div}},$$
where the squaring of $x$ is to be intended entrywise. Formally, the Newton method applied to the equation $f(x)=0$ becomes
\begin{eqnarray*}
x_{k+1}&=&x_k- J_f^{-1}(x_k)f(x_k)\\
       &=&J_f^{-1}(x_k)\bigl(J_f(x_k)x_k-f(x_k)\bigr)\\
       &=&J_f^{-1}(x_k)\bigl(x_k+AD_{(x_k^2)^{\div}}x_k-x_k+Ax_k^{\div}\bigr)\\
       &=&2J_f^{-1}(x_k)Ax_k^{\div}.
\end{eqnarray*}

To apply Newton's method exactly it is necessary to solve a linear system at each step and this would be too much expensive. Nevertheless an approximate solution of the system obtained by means of an iterative method is sufficient, giving rise to an inner-outer iteration. 
This approach becomes appealing when the matrix that has to be balanced is symmetric and sparse, which is the case for the power problem on real networks \cite{KR13}.

\begin{comment}
The authors note that if a particular stationary method is used for the inner iteration then Sinkhorn-Knopp iteration is obtained back. Hence, to obtain a faster convergence, a preconditioned conjugate gradient method is used for the inner iteration. This approach becomes appealing when the matrix that has to be balanced is symmetric and sparse, which is the case for the power problem on real networks.
%Experiments presented in  \cite{KR13} show a considerable improvement  with respect to Sinkhorn-Knopp %iteration.

As for centrality, we recall that the power method
$$x_{k+1}=\frac{Ax_k}{\|Ax_k\|},$$
is the simplest method to solve the centrality problem. The method converges under the hypothesis that $-r$ is not an eigenvalue of $A$, where $r$ be the largest eigenvalue of $A$. It is interesting to notice that $-r$ is an eigenvalue of $A$ if and only if the graph that $A$ represents is bipartite \cite{BH12}. Lanczos and Jacobi-Davidson methods \cite{Sa92} are faster alternatives to power method.

\end{comment}

We experimentally assessed the complexity of computation of power on the above mentioned real social networks (in fact, we used the largest biconnected component for the first three networks in order to work also with totally supported graphs). We use both Sinkhorn-Knopp method (SK for short) and Newton method (NM for short). We consider the computation on the original matrix as well as on the perturbed ones. We use as a benchmark the complexity of the computation of centrality using the power method (PM for short). The complexity is expressed as the overall number of matrix-vector product operations. If a matrix is sparse (this is the case for all tested networks), such operation has linear complexity in the number of nodes of the graph. The main empirical findings are summarized as follows (see Table \ref{statistics}): (i) SK on the original matrix is significantly slower than PM and diagonal perturbation does not significantly change its speed; (ii)
full perturbation significantly increases the speed of SK so that the complexity of SK with full perturbation and that of PM are comparable; moreover, the larger the damping parameter, the faster the method;
(iii) NM on the original matrix is much faster than SK: its complexity is comparable to that of fully perturbed SK and PM, and (iv) NM with diagonal perturbation is even faster than NM and the larger the damping parameter, the faster the method.

 \begin{table}[t]
\begin{center}
\begin{tabular*}{1\textwidth}{@{\extracolsep{\fill}}lllllll}
\textbf{Network} & \textbf{PM} & \textbf{SK} & \textbf{SK-D} & \textbf{SK-F} & \textbf{NM} & \textbf{NM-D} \\ \hline
Dolphin       & 73 & 294  & 300 & 72 &  47 & 30 \\ \hline
Madrid        & 28 & 416  & 320 & 78 &  46 & 27 \\ \hline
Jazz          & 42 & 300  & 288 & 78 &  37 & 27 \\ \hline
Karate        & 42 & --   & 494 & 52 &  -- & 31 \\ \hline
Collaboration & 65 & --   & 9740 & 30 & -- & 33  \\ \hline
Karenina      & 24 & --   & 1006 & 32 & -- & 32 \\ \hline
\end{tabular*}
\end{center}
\caption{Complexity of computation of power with different methods: PM (benchmark), SK (SK without perturbations for totally supported networks), SK-D (SK with diagonal perturbation and damping 0.15), SK-F (SK with full perturbation and damping 0.01), NM (NM without perturbations for totally supported networks), NM-D (NM with diagonal perturbation and damping 0.15). \label{statistics}}
\end{table}

\subsection{Relationship with alternative power measures} \label{alternatives}

Bonacich has proposed a family of parametric measures depending on two parameters $\alpha$ and $\beta$ \cite{B87}. If $A$ is the adjacency matrix of the graph, the Bonacich index $x$ is defined as:
\begin{equation}\label{bona}
x = \alpha Ae +\beta Ax.
\end{equation}

The index for a node is the sum of two components: a first one (weighted by the parameter $\alpha$) depends on the node's degree, a second one (weighted by the parameter $\beta$) depends on the index on the node's neighbors. From Equation \ref{bona}, under the condition that $I-\beta A$ is not singular,  it is possible to obtain the following explicit representation of the proposed measure:
\begin{equation}\label{bonae}
x = \alpha(I-\beta A)^{-1}Ae = \alpha (\sum_{k=0}^{\infty} \beta^k A^{k+1})  e.
\end{equation}

The equivalence with the infinite sum holds when $|\beta| < 1/r$, where $r = \max_i |\lambda_i|$, with $\lambda_i$ the eigenvalues of $A$ (that is, $r$ is the  spectral radius of $A$).  When the parameter $\beta$ is positive, the index is a centrality measure. In particular, the measure approaches eigenvector centrality as a limit as $\beta$ approaches $1/r$. On the other hand, when $\beta$ is negative, the index is a power measure: it corresponds to a weighted sum of odd-length paths (with positive sign) and even-length paths (with negative sign) \cite{B87}. Hence, powerful nodes correspond to nodes with many powerless neighbors.  Finally, when $\beta = 0$, the measure boils down to degree centrality.

The difficulty with this measure is that it is parametric, that is, it depends on parameters $\alpha$ and $\beta$. While it is simple to set the parameter $\alpha$, and in particular it can be used to assign exogenous power to nodes, the choice for the parameter $\beta$ in more delicate. In particular, the index makes sense when the parameter $|\beta| < 1 / r$, hence the spectral radius $r$ must be computed or at least approximated.

The precise relationship between Bonacich power (Bonacich index with negative $\beta$) and power defined in Equation \ref{power} is explained as follows. If we set $x_0 = (1/\gamma) e$ in Newton's iteration  for the computation of power described above we obtain
$$x_1=2\gamma (I+\gamma^2 A)^{-1}Ae.$$
But this first approximation is a member of the family of Bonacich's measures with $\alpha=2\gamma$ and $\beta=-\gamma^2$. Since $\beta$ is negative, we are indeed facing  a measure of power. Hence, Bonacich power can be considered as a first-order approximation of power using Newton method.

Bozzo et al.\ have investigated power measures on \textit{sets} of nodes \cite{BFR15}. Given a node set $T$ let $B(T)$ be the set of nodes whose neighbors all belong to $T$. Notice that nodes in $B(T)$ do not have connections outside $T$, hence are potentially at the mercy of nodes in $T$.
The authors define a power function $p$ such that $p(T) =  |B(T)| - |T|$. Hence, a set $T$ is powerful if it has potential control over a much larger set of neighbors $B(T)$. The power measure is interpreted as the characteristic function of a coalition game played on the graph and the \textit{Shapley value} of the game, that is, the average marginal contribution to power carried by a node when it is added to any node set, is proposed as a measure of power for single nodes. Interestingly enough, the discovered game-theoretic power measure corresponds to the second iteration of Sinkhorn-Knopp method for the computation of power as defined by Equation \ref{power}, that is, to the sum of reciprocals of neighbors' degrees.

The study of power has a long history in economics (in its acceptation of bargaining power) and sociology (in its interpretation of social power). Consider the most basic case where just two actors $A$ and $B$ are involved in a negotiation over how to divide one unit of money. Each party has an alternate option -- a backup amount that it can collect in case negotiations fails, say $\alpha$ for $A$ and $\beta$ for $B$. A natural prediction, known as Nash's bargaining solution \cite{N50}, is that the two actors will split the surplus $s = 1 - \alpha - \beta$, if any, equally between them: if $s < 0$ no agreement among $A$ and $B$ is possible, since any division is worse than the backup option for at least one of the party. On the other hand, if $s >= 0$, then $A$ and $B$ will agree on $\alpha + s/2$ for $A$ and $\beta + s/2$ for $B$. 

A natural extension of the Nash bargaining solution from pairs of actors to \textit{networks} of actors has been proposed in \cite{R84,CY92} and further investigated in particular in \cite{KT08,BBCKA15}. In the following, we describe the dynamics that captures such an extension. Let $A$ be the adjacency matrix of an undirected, unweighted graph $G$. Hence $A_{i,j} = 1$ if there is an edge $(i,j)$ in $G$ and $A_{i,j} = 0$ otherwise. Negotiation among actors is only possible along edges, each pair of actors on an edge negotiate for a fixed amount of 1\euro{}, and each actor may conclude a negotiation with at most 1 neighbor (1-exchange rule). For every edge $(i,j)$, define:

\begin{itemize}
\item  $R_{i,j}$ as the amount of \textit{revenue} actor $i$ receives in a negotiation with $j$;  
\item $L_{i,j}$ as the amount of revenue actor $i$ receives in the \textit{best alternative} negotiation excluding the one with $j$.
\end{itemize}

Notice that matrices $R$ and $L$ have the same zero-nonzero pattern of $A$. More precisely, consider the following iterative process. We start with $R^{(0)}_{i,j} = 1/2$ for all edges $(i,j)$ and $R^{(0)}_{i,j} = 0$ elsewhere. Let $N(i)$ be the set of neighbors of node $i$. For $t > 0$, the best alternative matrix $L^{(t)}$ at time $t$ is:

\begin{equation*}
  L^{(t)}_{i,j} = \left\{
    \begin{array}{ll}
      \max_{k \in N(i) \setminus j} R^{(t-1)}_{i,k} & \text{if } A_{i,j} = 1,\\
      0 & \text{otherwise }
    \end{array} \right.
\end{equation*}

Let the surplus $s^{(t)}_{i,j} =  1 - L^{(t)}_{i,j} - L^{(t)}_{j,i}$ be the amount for which actors $i$ and $j$ will negotiate at time $t$; notice that actor $i$ will never accept an offer from $j$ less than his alternate option $L^{(t)}_{i,j}$ and actor $j$ will never accept an offer from $i$ less than her alternate option $L^{(t)}_{j,i}$. Then the profit matrix $R^{(t)}$ at time $t$ is:

\begin{equation*}
  R^{(t)}_{i,j} = \left\{
    \begin{array}{ll}
      L^{(t)}_{i,j} + s^{(t)}_{i,j}/2 & \text{if } A_{i,j} = 1 \text{ and } s^{(t)}_{i,j} \geq 0,\\
      1 - L^{(t)}_{j,i} & \text{if } A_{i,j} = 1 \text{ and } s^{(t)}_{i,j} < 0,\\
      0 & \text{otherwise }
      
    \end{array} \right.
\end{equation*}

Notice that $R^{(t)}_{i,j} + R^{(t)}_{j,i} = 1$. That is, $R^{(t)}_{i,j}$ and $R^{(t)}_{j,i}$ is the Nash's bargaining solution of a negotiation between actors $i$ and $j$ given their alternate options $L^{(t)}_{i,j}$ and $L^{(t)}_{j,i}$.  
Let $R$ be the fixpoint of the iterative process $R^{(t)}$ for growing time $t$. The \textit{Nash power} $x_i$ of node $i$ is the best revenue of actor $i$ among its neighbors, that is $$x_i = \max_j R_{i,j}$$

Among many other attractive results, \cite{BBCKA15} shows that the dynamics always converge to a fixpoint solution.
Nash power bears some analogy with the one we propose and investigate in the present work, in particular both notions share the same recursive powerful-is-linked-with-powerless philosophy. Nash power for an actor $i$ directly depends on the revenues of $i$ among its neighbors, which directly depend on the alternate options of $i$ among its neighbors, which inversely depends on the revenues of neighbors of $i$, which determine the power of neighbors of $i$. Hence, power of an actor somewhat inversely depends on power of its neighbors. 

\begin{comment}
Nash power bears some analogy with the one we propose and investigate in the present work, in particular both notions share the same powerful-powerless philosophy. Nash power $n_i$ of an actor $i$ depends on the revenues $R^{(t)}_{i,j}$ among its neighbors: the higher the number of neighbors of $i$, the higher its power. The revenue $R^{(t)}_{i,j}$ depends on the alternate option $L^{(t)}_{i,j}$ of $i$, this in turns directly depends on the revenues $R^{(t-1)}_{i,k} = 1 - R^{(t-1)}_{k, i}$, and hence inversely depends on the revenues $R^{(t-1)}_{k, i}$ of neighbors of $i$. Hence, the lower the revenues of neighbors of $i$, the higher the power of $i$.
\end{comment}

Using Kendall correlation, we assessed the overlapping of power, as defined in this paper, with centrality, degree, as well as Bonacich power (that is, Bonacich index with negative parameter $\beta$), Shapley power (that is, the sum of reciprocals of neighbors' degrees), and Nash power on the above mentioned real social networks. 
The main empirical outcomes are summarized in the following (see Table \ref{exp3.correlation}): (i) as expected, both power and centrality are positively correlated with degree, but power is negatively correlated with centrality when the effect of degree is excluded (we used partial correlation); (ii) power is positively correlated with Bonacich power and the association increases as the parameter $\beta$ declines below $0$ down to $-1/r$, with $r$ the spectral radius of the adjacency graph matrix; moreover, the association is higher when the adjacency matrix is perturbed; (iii) power is positively correlated with Shapley power and the association is generally stronger than with Bonacich power;
(iv) power is positively correlated with Nash bargaining network power, but the strength of the correlation is generally weaker than with Shapley and Bonacich power. In particular, we noticed that the Nash-based method maps the power scores of the nodes of the surveyed networks into a small set of values, with very high frequency for values close to 0, 0.5 and 1. Hence, it is difficult to discriminate different gradations of power for nodes. 

\begin{table}[t]
\begin{center}
\begin{tabular}{llllll}
\textbf{Network} & \textbf{D} & \textbf{C} & \textbf{B} & \textbf{S} & \textbf{N} \\ \hline
Dolphin       & 0.81 & 0.35 &  0.89 & 0.91 & 0.72 \\ \hline
Madrid        & 0.62 & 0.33 &  0.69 & 0.68 & 0.48 \\ \hline
Jazz          & 0.85 & 0.62 &  0.91 & 0.85 & 0.17\\ \hline
Karate        & 0.77 & 0.36 &  0.74 & 0.96 & 0.75 \\ \hline
Collaboration & 0.77 & 0.05 &  0.77 & 0.85 & 0.60 \\ \hline
Karenina      & 0.75 & 0.45 &  0.62 & 0.89 & 0.86 \\ \hline
\end{tabular}
\end{center}
\caption{Correlation of power as defined in this paper with degree (D), centrality (C), Bonacich power (B), Shapley power (S) and Nash power (N). For the computation of power, we used diagonal perturbation (damping 0.15). For Bonacich power we used $\alpha = 1$ and $\beta = -0.85/r$, where $r$ is the spectral radius of the graph. \label{exp3.correlation}}
\end{table}

\section{Motivating example reloaded} \label{reload}

In the following we reload examples provided in Section \ref{motivation} and use them as a benchmark to compare the different notions of power described in Section \ref{alternatives}. When the graphs are not totally supported (all the cases but the 2-node path), we used diagonal perturbation with damping 0.15 to obtain a solution. Moreover, we set Bonacich index parameters $\alpha = 1$ and $\beta = -0.85/r$, where $r$ is the spectra radius of the graph.

In the 2-node path all methods agree to give identical power to both nodes. In the 3-node path $A - B - C$ all methods agree that B is the powerful one. Notably, Nash power assigns all power (1) to B and no power (0) to A and C, while the other methods say that A and C hold a small amount of power.
In the 4-node path $A - B - C - D$ all methods claim that B and C are the powerful ones. Moreover, all methods recognize that the power of B in this instance is less than its power in the 3-node path. Finally, in the 5-node path  $A - B - C - D - D$, all methods discriminate B and D as the most powerful nodes, followed by C, and finally A and E, with the only exception of Nash power, which assigns all power (1) to B and D, and null power (0) to all other nodes (hence the central node C has the same power as the peripheral nodes A and E, according to this method). All methods, with the exception of Shapley, notice that the power of B is higher in the 5-node path with respect to the 4-node path. This because Shapley is a local method, while the other ones are global (recursive) methods.

\begin{figure}[t]
\begin{center}
\includegraphics[scale=0.60]{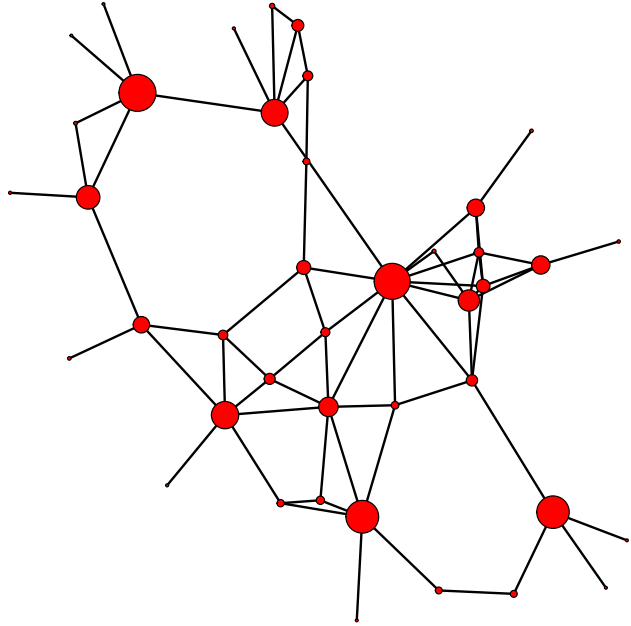}
\end{center}
\vspace{-2cm}
\caption{The European natural gas pipeline network with node size proportional to power.}
\label{fig:gas2}
\end{figure}

Let us now reload the natural gas pipeline example. Figure \ref{fig:gas2} depicts the network with node size proportional to node power. Notice how the power of a node balances the reciprocal of power of its neighbors. We ranked all countries according to the alternative power measures.  Below, we show the Kendall correlation matrix for the following measures: Shapley power (S), Bonacich power (B), power as defined in this paper (P), Nash power (B), and eigenvector centrality (C):

\begin{equation*}
\begin{matrix}
&     \mathbf{S}  &  \mathbf{B}  &  \mathbf{P}  & \mathbf{N} & \mathbf{C} \\
\mathbf{S} &  1.00 &  0.82 &  0.90 &  0.69 &  0.41 \\
\mathbf{B} &  0.82 & 1.00 & 0.84 & 0.61 & 0.46 \\
\mathbf{P}  & 0.90  & 0.84 & 1.00 & 0.72 & 0.47 \\
\mathbf{N} & 0.69  & 0.61 & 0.72 & 1.00 & 0.36 \\
\mathbf{C} & 0.41 & 0.46 & 0.47 & 0.36 & 1.00
\end{matrix} 
\end{equation*}

As expected, P is well correlated with its approximations B and S. Moreover, P is positively correlated with N, but the correlation strength is weeker. Also, the association between P and C is positive but week, and this is mostly explained by the association with degree of both measures. Indeed, if we exclude the effect of degree, this correlation is negative, meaning that the variables are negatively associated when the effect of degree is filtered out. 

These associations are mirrored in the top-10 rankings and ratings listed in Table \ref{tab:rank}. Notice how Germany (DE) is both powerful and central; Italy (IT) is powerful but not central; Norway (NO) is central but not powerful, and there are many countries that are neither powerful nor central outside the rankings. Italy is the archetype of powerful country that is not central: it contracts with nations that are both powerless and peripheral, namely Austria, Switzerland, Croatia, Tunisia, Libya and Slovenia, with only Austria included in the top-10 power list and only Austria and Switzerland included in the top-10 centrality list (and not in the first positions). The ranking according to Nash power is somewhat weird if compared with the other power measures; for instance, Germany has bargaining power 0.5 and is only in 14th postion, tied in the standings with other 12 countries. It is fair noticing that the generalized Nash bargaining solution has been originally proposed in the context of assignments problems (eg., matching of apartments to tenants or students to colleges), and was not suggested as a rating and ranking method for nodes in a network. For instance, in the balanced matching over the gas network Italy preferably negotiates with Libya and Turkey with Georgia. In fact, \cite{CY92} proposes to use the negotiation values obtained by each node in such a solution as a structural power measure (see also Chapter 12 in \cite{EK10} for a similar interpretation). According to the experiments we made for this paper, this interpretation seems to be opinable, but further investigation is necessarily to gain a solid conclusion.

\begin{table}[t]
\begin{center}
\begin{tabular}{lllllllllll}
\textbf{P} & TR & DE & IT & ES & HU & RU & BG & BE & AT & UK \\
& 6.26 & 6.09 & 5.54 & 5.50 & 4.62 & 4.53 & 3.99 & 3.60 & 3.29 & 3.09 \\\\
\textbf{B} &  DE &  IT &  HU &  TR &  AT &  RU &  ES &  BE &  NO &  BG \\
& 7.07 & 4.58 & 4.06 & 3.73 & 3.41 & 3.37 & 3.29 & 3.23 & 2.92 & 2.76 \\\\ 
\textbf{S}  &  TR & ES & IT & DE & RU & HU & BG & RO & UK & AT  \\
& 2.92  &  2.70  &  2.56  &  2.54  &  2.46  &  2.23  &  1.95  &  1.67  &  1.53  &  1.51 \\\\
\textbf{N} &  ES & TR & BG & RU & IT & HU & UK & RO & DK & LV  \\
& 1.00  &  1.00  &  1.00  &  0.87  &  0.83  &  0.83  &  0.75  &  0.75  &  0.75  &  0.75  \\\\
\textbf{C} &  DE &  NO &  BE &  NL &  FR &  AT &  DK &  CH &  CZ &  UK \\
& 1.00 & 0.71 & 0.68 & 0.62 & 0.56 & 0.52 & 0.46 & 0.45 & 0.40 & 0.39  \\
\end{tabular}
\end{center}
\caption{The top-ten powerful and central countries in the European natural gas exchange network.}
\label{tab:rank}
\end{table}

\section{Discussion} \label{discussion}

We have proposed a theory on power in the context of networks. The philosophy underlying our notion of power claims that an actor is powerful if it is connected with many powerless actors. This thesis has its roots and applications mainly in sociology and economics, and this traces an historical parallel with its celebrated linear counterpart, namely eigenvector centrality. The thesis is compiled into an elegant power equation $$x = A \, x^{\div},$$ where $x$ is power and $A$ represents the network. We have investigated the domain of existence and uniqueness of the power equation using results of combinatorial matrix theory. The power equation has a solution exactly on the class of graphs that exhibit total support. Matrix perturbations of the original network can be applied when the network is outside the existence domain. The power equation can be solved using fast iterative methods. Finally, we tested and compared our notion of power on the natural gas pipeline network among European nations. 

The virtues of our definition of power are: (i) it is a simple, elegant, and understandable measure; (ii) it is theoretically well-grounded and directly related to the well-studied balancing problem, making it possible to borrow results and techniques from this setting; (iii) the formulation is not parametric; (iv) it is global (the power of a node depends on the entire network) and can be approximated with a simple local measure - the sum of reciprocals of node degrees - which has a game-theoretic interpretation and can be efficiently computed on all networks. The vices are: (i) an exact solution exists only on the class of totally supported networks;  (ii) it is not immediately normalizable, hence power values for nodes on different networks must be compared with care.

%\bibliographystyle{plain}
%\bibliography{bibliography}

\end{document}